\newtheorem{theorem}{Theorem}
\newtheorem{lemma}{Lemma}
\newtheorem{corollary}{Corollary}
\newtheorem{assumption}{Assumption}
\providecommand{\keywords}[1]
{
	\small	
	\textbf{\textit{Keywords---}} #1
}
\title{Distributed Learning with Partial Information Sharing}
\date{}
\author{
  P Raghavendra Rao\\
  Department of Electrical Engineering\\
  IIT Tirupati\\
  \texttt{ee20d002@iittp.ac.in}
  \and
  Pooja Vyavahare\\
  Department of Electrical Engineering\\
  IIT Tirupati\\
  \texttt{poojav@iittp.ac.in}
}
\begin{document}

\maketitle

\begin{abstract}
	This work studies the distributed learning process on a network of agents. Agents make partial observation about an unknown hypothesis and iteratively share their beliefs over a set of possible hypotheses with their neighbors to learn the true hypothesis. We present and analyze a distributed learning algorithm in which agents share belief on only one randomly chosen hypothesis at a time. Agents estimate the beliefs on missed hypotheses using previously shared beliefs. We show that agents learn the true hypothesis almost surely under standard network connectivity and observation model assumptions if belief on each hypothesis is shared with positive probability at every time. We also present a memory-efficient variant of the learning algorithm with partial belief sharing and present simulation results to compare rate of convergence of full and partial information sharing algorithms.  
\end{abstract}

\keywords{Distributed learning,  distributed hypothesis testing, partial information sharing}

\section{Introduction}
\label{sec:intro}

Distributed hypothesis testing (or distributed learning) over a network models opinion dynamics with applications in varied fields like social learning, control systems etc. In this setting agents observe private signals generated based on an unknown hypothesis, which are partially informative, and communicate with each other over a network to learn the true hypothesis. Non-Bayesian distributed learning algorithms in which agents share their \textit{belief vector}\footnote{Belief vector of an agent is a probability vector representing its confidence on a hypothesis being the true hypothesis.} on the set of possible hypotheses has been proposed in the literature \cite{Lalitha18, Mitra21a}. 

While most of the works \cite{Lalitha18, Mitra21a} are based on agents sharing the complete belief vector with  neighbors at all time, in recent years communication efficient algorithms in which agents share belief on only one (or a subset of) hypothesis has gained attention \cite{Mitra21b, Kayaalp24, Toghani22}. Authors in \cite{Mitra21b} modify the learning algorithm of \cite{Mitra21a} to enable quantized and event triggered communication. While \cite{Toghani22} present and analyze variation of algorithm of \cite{Lalitha18} for various quantized communication schemes, authors in \cite{Kayaalp24} present the variation with limited communication (by sharing belief on one hypothesis at a time). In this work, we present two variations of algorithm of \cite{Mitra21a} which enable agents to share belief on only one hypothesis at a time and show that agents learn the true hypothesis almost surely.

\section{System model}
\label{sec:model}

We consider a discrete-time system in which a set of $N$ agents are connected over a strongly connected network\footnote{A strongly connected network is a graph in which there is a path between any two pair of agents.} $G = (V,E).$ Here $V$ is the set of $N$ agents and $E$ is the set of communication edges between them. An agent $j$ is said to be a neighbor of agent $i$ if $(i,j) \in E.$ Let $N_i$ be the set of neighbors of agent $i.$
All agents are trying to find the one true hypothesis $h^*$ from a set of $M$ hypotheses $\mathcal{H} = \{h_1,\ldots, h_M\}.$ Every agent $i \in V$ makes a private observation $o_{i,t} \in \mathcal{O}_i$ at time $t \in \mathbb{N}^+$ where the observation $o_{i,t}$ is generated based on the conditional likelihood function $f_i(.|h^*).$ We assume that the observations of an agent are i.i.d. over time with $|\mathcal{O}_i| < \infty$ and independent across agents. Each agent knows its own likelihood function $f_i(o|h) >0,~\forall o \in \mathcal{O}_i, ~\forall h \in \mathcal{H}$ but the private observations give only partial information about the hypothesis $h^*.$ 

The challenge in this setting is to learn the true hypothesis $h^*$ by iteratively communicating with the neighbors. To do so, at every time $t$ each agent $i$ maintains a probability vector, $\beta_{i,t},$ on $\mathcal{H}$ where $\beta_{i,t}(h)$ represents the agents $i$'s belief (confidence) on hypothesis $h$ being the true hypothesis at time $t.$ Every agent starts with equal belief on each hypothesis, i.e., $\beta_{i,0}(h) = 1/M,~\forall h \in \mathcal{H}.$ We say that the \textit{true learning} happens in the network when $\beta_{i,t}(h^*) \xrightarrow{\text{a.s.}} 1, ~\forall i \in V.$
\subsection{Learning with full information sharing}
\label{sec:full}

In this work, we focus on the \textit{min-rule} based non-Bayesian distributed learning (introduced in \cite{Mitra21a}) in which agents share their public belief vector $\beta_{i,t}$ with neighbors. In \cite{Mitra21a}, at every time $t,$ each agent $i$ observes $o_{i,t}$ and updates a \textit{local} belief vector to get an intermediate belief vector as:
\begin{equation}
\alpha_{i,t}(h) = \frac{f_i(o_{i,t}|h) \alpha_{i,t-1}(h)}{\sum_{h' \in \mathcal{H}} f_i(o_{i,t}|h') \alpha_{i,t-1}(h')}. \label{eq:local_belief}
\end{equation} 
Every agent starts with $\alpha_{i,0}(h) = 1/M,~\forall h \in \mathcal{H}.$ Then, every agent collects the belief vector $\beta_{j,t-1}$ of all its neighbors $j$ and updates its own belief vector as:
\begin{equation}
\beta_{i,t}(h) = \frac{\min\left( \left(\beta_{j,t-1}(h)\right)_{ j \in N_i}, \beta_{i,t-1}(h), \alpha_{i,t}(h) \right)}{\sum_{h' \in \mathcal{H}}\min\left( \left(\beta_{j,t-1}(h')\right)_{ j \in N_i}, \beta_{i,t-1}(h'),\alpha_{i,t}(h') \right)}. \label{eq:public_full}
\end{equation}
The following standard assumption enables agents to learn the true hypothesis by interacting with each other and Theorem~1 of \cite{Mitra21a} show that the true learning happens in the network under some network connectivity conditions.
\begin{assumption}\label{as:global}
	For every pair $h_l,h_k \in \mathcal{H}$ such that $h_l \neq h_k,$ there exists at least one agent $i \in V$ which can distinguish between the hypotheses $h_l, h_k.$ An agent $i$ can distinguish between $h_l, h_k$ if the KL-divergence between the corresponding likelihood distributions is strictly positive, i.e.,
	\begin{equation*}
	K_i(h_l, h_k) = \sum_{o_i \in \mathcal{O}_i} f_i(o_i|h_l) \log \frac{f_i(o_i|h_l) }{f_i(o_i|h_k)} > 0. \label{eq:KL}
	\end{equation*}
\end{assumption}
The set of all agents that can distinguish between the hypotheses $h_l$ and $h_k$ is denoted by $D(h_l,h_k).$
\begin{figure}
	\begin{center}
	\includegraphics[width=0.8\textwidth]{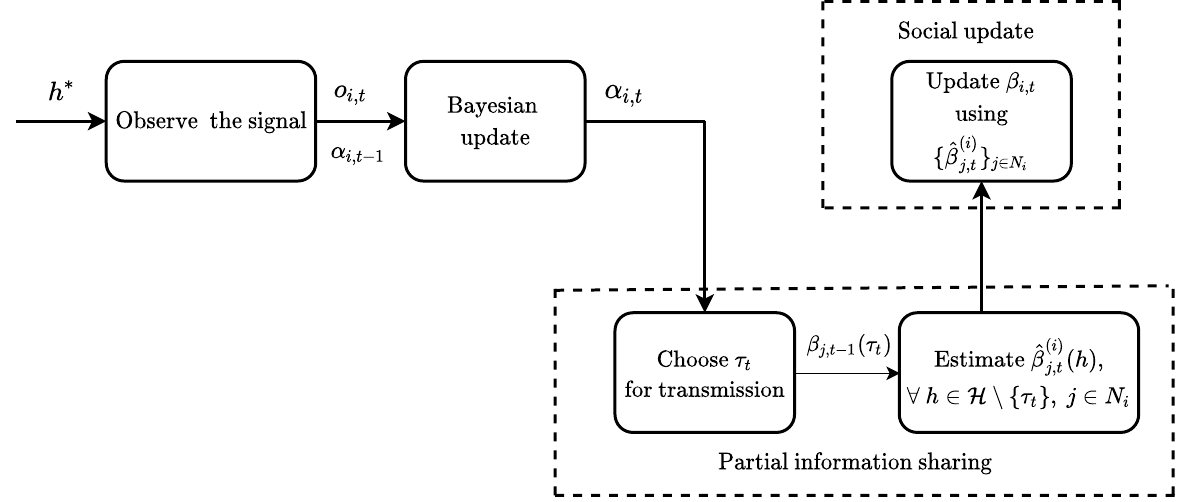}
	\end{center}
\caption{Distributed learning algorithm at agent $i \in V$ with partial information sharing.}
\label{fig:sysmodel}
\end{figure}
%
%
Note that for the belief update in \eqref{eq:public_full} every agent needs to share the $M$-dimensional belief vector to all its neighbors at every time. Motivated by the social learning setting, in which agents tend to share opinion on the trending topic at any time, in the next section we present distributed learning rules in which agents share belief on only one hypothesis at a time.

\section{Learning with partial information sharing}
\label{sec:partial}

In this section, we present a learning rule which allows agents to share belief on only one randomly chosen hypothesis at a time. At every time $t,$ each agent $i$ observes $o_{i,t}$ and updates its local belief vector $\alpha_{i,t}$ by \eqref{eq:local_belief}. Then, every agent collects the belief  $\beta_{j,t-1}(\tau_t)$ from all its neighbors $j \in N_i$ where $\tau_t \in \mathcal{H}$ is chosen uniformly at random. We assume that $\tau_t$ is i.i.d. across time and all agents pick same hypothesis $\tau_t$ for belief sharing at time $t.$ However, our results hold true even when every agent $i$ chooses hypothesis $\tau_t^i$ at time $t$ independent of every other agent in the network. 

Note that every agent $i$ has only one belief value $\beta_{j,t-1}(\tau_t)$ of its neighbor $j$ and needs to estimate $j$'s belief on other hypotheses which it does by maintaining an estimate vector of $j$'s belief at every time. Let $\hat{\beta}_{j,t}^{(i)}$ be the estimate of $\beta_{j,t}$ vector at agent $i$ with initial value as $ \hat{\beta}_{j,0}^{(i)}(h) := 1/M,~\forall h \in \mathcal{H},i \in V,j \in N_i.$ At any time $t,$ agent $i$ updates the estimate for agent $j$ as follows:
\begin{align}
\hat{\beta}_{j,t}^{(i)}(h) = \begin{cases}
\beta_{j,t-1}(h) /D, ~~ h = \tau_t \\
\hat{\beta}_{j,t-1}^{(i)}(h)/D, ~~ h \neq \tau_t.
\end{cases} \label{eq:estimate_previous}
\end{align}
Here $D = 1- \hat{\beta}_{j,t-1}^{(i)}(\tau_t) + \beta_{j,t-1}(\tau_t)$ is to ensure that the estimate vector $\hat{\beta}_{j,t}^{(i)}$ is a probability vector. Agent $i$ uses these estimates to update its public belief (similar to \eqref{eq:public_full}) as follows:
\begin{equation}
\beta_{i,t}(h) = \frac{\min\left( \left(\hat{\beta}_{j,t}^{(i)}(h)\right)_{ j \in N_i}, \beta_{i,t-1}(h), \alpha_{i,t}(h) \right)}{\sum\limits_{h' \in \mathcal{H}}\min\left( \left(\hat{\beta}_{j,t}^{(i)}(h')\right)_{ j \in N_i}, \beta_{i,t-1}(h'),\alpha_{i,t}(h') \right)}. \label{eq:public_previous}
\end{equation}
We show that all the agents eventually learn the true hypothesis and reject all false hypotheses as presented below.
\begin{theorem}\label{th:partial_previous}
	Let the communication graph $G$ be strongly connected, and observation model satisfies Assumption~\ref{as:global}. If every agent shares belief on only one randomly chosen hypothesis at a time and uses estimates from \eqref{eq:estimate_previous} to update beliefs in \eqref{eq:local_belief}, \eqref{eq:public_previous}, then the true learning happens with rate of rejection of a false hypothesis, $h,$ being lower bounded by maximum KL-divergence across $G$ between the pair of hypothesis $h^{*}$ and $h$ as: 
		\begin{equation*}
			\liminf_{t \rightarrow \infty} -\frac{\log{\beta_{i,t}(h)}}{t} \geq \max\limits_{j \in V}K_j(h^{*},h) ~~\text{a.s.}
		\end{equation*} 
\end{theorem}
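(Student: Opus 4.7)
The plan is to reduce the partial-sharing setting to the analysis of Theorem~1 of \cite{Mitra21a} by showing that, asymptotically, the estimates $\hat{\beta}_{j,t}^{(i)}$ track the actual neighbor beliefs $\beta_{j,t-1}$ closely enough that the same exponential rates are achieved. I would work on the probability-one event on which (i) each hypothesis $h\in\mathcal{H}$ is picked by $\{\tau_t\}$ infinitely often---immediate from Borel--Cantelli for the i.i.d.\ uniform $\tau_t$, with the fraction of times $h$ is picked in $[1,t]$ in fact tending to $1/M$---and (ii) the SLLN holds for every log-likelihood ratio $\log\bigl(f_j(o_{j,t}|h^*)/f_j(o_{j,t}|h)\bigr)$ at every agent $j$.

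The first step is to prove a uniform almost-sure lower bound $\beta_{i,t}(h^*)\geq\rho>0$ and the analogous bound $\hat{\beta}_{j,t}^{(i)}(h^*)\geq\rho'>0$ for all $i,j,t$. Positivity of the finite-support likelihoods bounds the single-step ratios in \eqref{eq:local_belief} in some $[c,1/c]$, keeping $\alpha_{i,t}(h^*)$ bounded below; the min-rule \eqref{eq:public_previous} then preserves the lower bound inductively since its numerator at $h^*$ is a minimum of finitely many positive quantities and its denominator is at most $\sum_{h'}\alpha_{i,t}(h')=1$. For the estimates, the normalizer $D$ in \eqref{eq:estimate_previous} is itself bounded below: if $\hat{\beta}_{j,s-1}^{(i)}(\tau_s)\geq 1-\rho'$ then $\hat{\beta}_{j,s-1}^{(i)}(h^*)<\rho'$, contradicting the induction hypothesis, so $D\geq\rho'$. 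Together with $D\leq 2$ this makes $|\log D|$ deterministically bounded, and the infinitely-often refreshes of $h^*$ transfer the lower bound on $\beta_{j,t}(h^*)$ to $\hat{\beta}_{j,t}^{(i)}(h^*)$.

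The second step handles the decay rate at the best-distinguishing agent $j^\star=\arg\max_{j\in V}K_j(h^*,h)$ for a false $h\neq h^*$. Iterating \eqref{eq:local_belief} and applying SLLN gives
\begin{equation*}
\tfrac{1}{t}\log\frac{\alpha_{j^\star,t}(h^*)}{\alpha_{j^\star,t}(h)}\;\xrightarrow{\text{a.s.}}\;K_{j^\star}(h^*,h),
\end{equation*}
which together with the lower bound on $\alpha_{j^\star,t}(h^*)$ yields $\liminf_{t\to\infty}-t^{-1}\log\alpha_{j^\star,t}(h)\geq K_{j^\star}(h^*,h)$. In the min-rule at $j^\star$, the denominator is bounded below by the $h^*$-summand ($\geq\rho$), so $\beta_{j^\star,t}(h)\leq\alpha_{j^\star,t}(h)/\rho$ inherits the same asymptotic rate.

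The main obstacle, and the step that genuinely separates this proof from \cite{Mitra21a}, is propagating the rate through the estimates. For a neighbor $i$ of $j^\star$, \eqref{eq:public_previous} uses $\hat{\beta}_{j^\star,t}^{(i)}(h)$ in place of $\beta_{j^\star,t-1}(h)$. Let $t_0\leq t$ be the last time with $\tau_{t_0}=h$; unrolling \eqref{eq:estimate_previous} gives $\hat{\beta}_{j^\star,t}^{(i)}(h)=\beta_{j^\star,t_0-1}(h)\cdot\prod_{s=t_0}^{t}D_s^{-1}$. Since $|\log D_s|=O(1)$ deterministically and the refresh gaps are geometric with mean $M$, a union bound plus Borel--Cantelli yield $t-t_0=O(\log t)$ almost surely, so $\sum_{s=t_0}^{t}\log D_s=O(\log t)=o(t)$. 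Combined with the step-two rate on $\beta_{j^\star,t_0-1}(h)$ this gives $\liminf-t^{-1}\log\hat{\beta}_{j^\star,t}^{(i)}(h)\geq K_{j^\star}(h^*,h)$, and the min-rule transfers this to $\beta_{i,t}(h)$. Iterating over the diameter of the strongly connected graph extends the bound to every $i\in V$; since $\beta_{i,t}(h)\to 0$ for every false $h$ and the belief vector sums to $1$, almost-sure true learning $\beta_{i,t}(h^*)\to 1$ follows as a corollary.
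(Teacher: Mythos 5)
Your overall architecture is the same as the paper's: (i) a uniform lower bound on the true-hypothesis beliefs, (ii) the exponential rate at a discriminating agent obtained from the SLLN applied to the local Bayesian recursion, and (iii) propagation of that rate through the strongly connected graph using the fact that each hypothesis is refreshed infinitely often. In step (iii) you are actually more explicit than the paper: the paper bounds $\beta_{j,t}(h)$ at a refresh time $t_1$ and then only at the immediately following step (``WLOG $t_2=t_1+1$''), even though between refreshes the bound degrades by a factor $1/\nu$ per step; your accounting via the almost-sure $O(\log t)$ bound on refresh gaps, which keeps the accumulated normalizers $\prod_s D_s^{-1}$ subexponential, is precisely what is needed to turn that sketch into a proof of the stated $\liminf$ at every agent, and your choice to propagate from $j^\star=\arg\max_j K_j(h^*,h)$ is what actually delivers the claimed rate $\max_j K_j(h^*,h)$ (the paper's Lemma~\ref{lm:rate_discriminating} by itself only gives $K_i(h^*,h)$ at a discriminating agent $i$).

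Two of your justifications do not hold up as written, however. First, ``the single-step ratios in \eqref{eq:local_belief} lie in $[c,1/c]$, keeping $\alpha_{i,t}(h^*)$ bounded below'' is a non sequitur: a positive sequence whose consecutive ratios all lie in $[c,1/c]$ can still decay geometrically to zero. The correct argument (part (2) of Lemma~\ref{lm:localbelief}, imported from \cite{Mitra21a}) writes $1/\alpha_{i,t}(h^*)$ as a finite sum of nonnegative martingales $\prod_{s\le t} f_i(o_{i,s}|h')/f_i(o_{i,s}|h^*)$ and invokes martingale convergence to get an almost-sure, sample-path-dependent floor. Second, your induction for $\hat{\beta}^{(i)}_{j,t}(h^*)\ge\rho'$ does not close: the normalizer $D_s$ exceeds $1$ whenever $\beta_{j,s-1}(\tau_s)>\hat{\beta}^{(i)}_{j,s-1}(\tau_s)$, so between refreshes of $h^*$ the estimate of $h^*$ is repeatedly divided by factors as large as $2$, and your own gap machinery then yields only a polynomially decaying floor of order $t^{-c}$, not a constant; the claimed contradiction argument bounds $D_s$ from below but not from above, which is what the induction actually needs. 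A polynomial floor would still suffice for the exponential rate conclusions, but steps (ii)--(iii) would have to be re-run with time-varying denominators. (Be aware that the paper's Lemma~\ref{lm:bound_true_hypo} asserts the constant bound via an induction with the same unaddressed issue, so this is a point where both arguments need repair rather than a defect unique to yours.)
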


\subsection{Proof of Theorem~\ref{th:partial_previous}}
\label{sec:results}

In this section, we present the proof of Theorem~\ref{th:partial_previous}. Let the probability space generated by the sequence of $o_{i,t}$ and $\tau_t$ be $(\Omega, \mathcal{F}, \mathsf{P})$ where $\Omega = \{\omega: \omega = (o_{1,t},\ldots,o_{N,t}, \tau_t)~ o_{i,t} \in \mathcal{O}_i, \tau_t \in \mathcal{H}, t \in \mathbb{N}^+)\}$ be the sample space, $\mathcal{F}$ be the $\sigma$-algebra generated by observations and $\tau_t$ and $\mathsf{P}$ is the probability measure induced by sample paths in $\Omega.$ 

Observe that, for any agent $i \in V,$ the local belief $\alpha_{i,t}(h)$ on a hypothesis $h$ depends only on its own private observations $o_{i,t}$ and likelihood function $f_{i}(.|h)$ as it is updated using Bayesian rule in \eqref{eq:local_belief}. Thus, for any agent $i$ which can distinguish between the true hypothesis $h^*$ and a false hypothesis $h,$ one can show that $\alpha_{i,t}(h)$ goes to zero almost surely. More formally,

\begin{lemma} \cite{Mitra21a} \label{lm:localbelief}
	Consider a false hypothesis $h \in \mathcal{H},$ and an agent $i \in D(h^{*},h).$ Then update rule \eqref{eq:local_belief} ensures that:
	\begin{enumerate}
		\item $\alpha_{i,t}(h) \xrightarrow{\text{a.s.}}{} 0,$
		\item $\alpha_{i,\infty}(h^{*}) \triangleq \lim\limits_{t \rightarrow \infty}\alpha_{i,t}(h^{*})$ \text{~exists a.s., and~} \\ $\alpha_{i,\infty}(h^{*}) \geq \alpha_{i,0}(h^{*}),$ and
		\item $\frac{1}{t}\log\frac{\alpha_{i,t}(h)}{\alpha_{i,t}(h^{*})} \xrightarrow{\text{a.s.}} -K_i(h^*,h).$ 
	\end{enumerate}
\end{lemma}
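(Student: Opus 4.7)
The plan is to reduce the lemma to the classical consistency statement for a single agent's Bayesian posterior under i.i.d.\ observations drawn from $f_i(\cdot|h^{*})$. The crucial observation is that the local update (\ref{eq:local_belief}) is a pure Bayesian rule that involves neither neighbors nor estimates, so all three claims can be read off from the dynamics of the log-likelihood ratio $\log(\alpha_{i,t}(h)/\alpha_{i,t}(h^{*}))$. I would prove part~(3) first using the strong law of large numbers, then deduce parts~(1) and (2) from it.

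First I would observe that, by dividing the expression for $\alpha_{i,t}(h)$ by the one for $\alpha_{i,t}(h^{*})$, the common normalizer cancels and yields the additive recursion
\begin{equation*}
\log\frac{\alpha_{i,t}(h)}{\alpha_{i,t}(h^{*})} = \log\frac{\alpha_{i,t-1}(h)}{\alpha_{i,t-1}(h^{*})} + L_t, \qquad L_t := \log\frac{f_i(o_{i,t}|h)}{f_i(o_{i,t}|h^{*})}.
\end{equation*}
Unrolling this and using the uniform initialization $\alpha_{i,0}(h) = \alpha_{i,0}(h^{*}) = 1/M$ gives $\log(\alpha_{i,t}(h)/\alpha_{i,t}(h^{*})) = \sum_{s=1}^{t} L_s$. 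Under the true hypothesis, $\{o_{i,s}\}$ are i.i.d.\ with distribution $f_i(\cdot|h^{*})$, so $\{L_s\}$ are i.i.d.\ with mean $\mathbb{E}[L_s] = -K_i(h^{*},h)$, and the integrability $\mathbb{E}|L_s| < \infty$ follows from $f_i(o|h)>0$ and $|\mathcal{O}_i|<\infty$, which bound $|L_s|$ uniformly. Kolmogorov's SLLN then gives $t^{-1}\sum_{s=1}^{t} L_s \to -K_i(h^{*},h)$ a.s., which is part~(3).

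For part~(1), the assumption $i \in D(h^{*},h)$ combined with Assumption~\ref{as:global} gives $K_i(h^{*},h) > 0$, so part~(3) forces $\alpha_{i,t}(h)/\alpha_{i,t}(h^{*}) \to 0$ a.s., and since $\alpha_{i,t}(h^{*}) \leq 1$ we conclude $\alpha_{i,t}(h) \to 0$ a.s. For part~(2), let $\mathcal{H}_i^{*} := \{h' \in \mathcal{H} : f_i(\cdot|h') = f_i(\cdot|h^{*})\}$ be the class of hypotheses observationally indistinguishable from $h^{*}$ at agent $i$. The same ratio argument as above shows that for every $h' \in \mathcal{H}_i^{*}$ the ratio $\alpha_{i,t}(h')/\alpha_{i,t}(h^{*})$ equals its initial value $1$ for all $t$, so $\alpha_{i,t}(h') = \alpha_{i,t}(h^{*})$. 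For $h' \notin \mathcal{H}_i^{*}$ one has $K_i(h^{*},h')>0$ and part~(1) yields $\alpha_{i,t}(h') \to 0$ a.s. Using $\sum_{h' \in \mathcal{H}} \alpha_{i,t}(h') = 1$ and letting $t\to\infty$ gives $|\mathcal{H}_i^{*}| \cdot \alpha_{i,t}(h^{*}) \to 1$ a.s., so the limit $\alpha_{i,\infty}(h^{*}) = 1/|\mathcal{H}_i^{*}|$ exists almost surely and is at least $1/M = \alpha_{i,0}(h^{*})$.

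There is no real obstacle: the local update decouples entirely from the network and from the random hypothesis selection $\tau_t$, so the whole lemma rests on a one-line application of SLLN and some elementary bookkeeping on indistinguishability classes. The only subtle point is making sure that $L_s$ is integrable (which relies on the standing assumptions $f_i(o|h)>0$ and $|\mathcal{O}_i|<\infty$) and noticing that when $K_i(h^{*},h)=0$ the log-ratio is identically zero, which is precisely why part~(1) needs the distinguishability assumption and why the constant in part~(2) can exceed $\alpha_{i,0}(h^{*})$ strictly whenever agent $i$ cannot by itself separate $h^{*}$ from every competing hypothesis.
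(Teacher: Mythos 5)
Your proof is correct. Note that the paper itself does not prove Lemma~\ref{lm:localbelief} but imports it from \cite{Mitra21a}; your argument --- reducing \eqref{eq:local_belief} to the additive log-likelihood-ratio recursion, applying the strong law of large numbers for part (3), and then deducing parts (1) and (2) via the class of hypotheses observationally indistinguishable from $h^{*}$ --- is precisely the standard proof of that cited result, so it matches the intended approach.
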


Let $\Omega_0$ be the set of sample paths for which the result of Lemma~\ref{lm:localbelief} holds. The following lemma shows that for any sample path $\omega \in \Omega_0,$ there is a finite time $t'(\omega)$ such that local and public belief on true hypothesis for every agent is bounded away from zero after time $t'(\omega).$

\begin{lemma}  \label{lm:bound_true_hypo}
	For every sample path $\omega \in \Omega_0,$ with update rules \eqref{eq:local_belief}, \eqref{eq:estimate_previous} and \eqref{eq:public_previous}, there exists constants $\nu(\omega) \in (0,1)$ and $t'(\omega) \in (0,\infty)$ such that the following holds true:
	\begin{enumerate}
		\item $\alpha_{i,t}(h^{*})\geq \nu(\omega), ~ \forall ~ t \geq t'(\omega), ~\forall ~ i \in V$ and
		\item $\beta_{i,t}(h^{*})\geq \nu(\omega), ~ \forall ~ t \geq t'(\omega), ~\forall ~ i \in V.$
	\end{enumerate}
\end{lemma}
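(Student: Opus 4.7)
My plan is to handle Part 1 directly from Lemma~\ref{lm:localbelief} and finiteness of $V$, then derive Part 2 from the min-rule structure of \eqref{eq:public_previous} after securing a uniform lower bound on the estimate vectors.

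For Part 1, fix $\omega \in \Omega_0$. Lemma~\ref{lm:localbelief}(2) gives $\alpha_{i,\infty}(h^*) \geq 1/M > 0$ for every $i \in V$, and $\alpha_{i,t}(h^*) \to \alpha_{i,\infty}(h^*)$ along the whole sequence. Setting $\nu_1(\omega) := \tfrac{1}{2}\min_{i \in V}\alpha_{i,\infty}(h^*) \geq 1/(2M)$ and using finiteness of $V$, one obtains $t_1(\omega) < \infty$ with $\alpha_{i,t}(h^*) \geq \nu_1(\omega)$ for all $t \geq t_1(\omega)$ and $i \in V$, which proves Part 1.

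For Part 2, the starting point is that the denominator in \eqref{eq:public_previous} satisfies $\sum_{h'}\min(\cdot) \leq \sum_{h'}\alpha_{i,t}(h') = 1$, which yields the monotone min-rule inequality
\begin{equation*}
\beta_{i,t}(h^*) \ \geq\ \min\Big(\{\hat{\beta}_{j,t}^{(i)}(h^*)\}_{j\in N_i},\ \beta_{i,t-1}(h^*),\ \alpha_{i,t}(h^*)\Big).
\end{equation*}
For $t \geq t_1(\omega)$ the local term is bounded below by $\nu_1(\omega)$, so it remains to bound $\hat{\beta}_{j,t}^{(i)}(h^*)$ from below uniformly in $t$. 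From \eqref{eq:estimate_previous}, writing $D_t = 1+\beta_{j,t-1}(\tau_t)-\hat{\beta}_{j,t-1}^{(i)}(\tau_t)$, the crude worst-case bound $D_t \leq 2$ gives $\hat{\beta}_{j,t}^{(i)}(h^*) \geq \tfrac{1}{2}\min(\hat{\beta}_{j,t-1}^{(i)}(h^*),\, \beta_{j,t-1}(h^*))$. I would then set up a joint induction on the pair $(\min_{i}\beta_{i,t}(h^*),\ \min_{i,j}\hat{\beta}_{j,t}^{(i)}(h^*))$, leveraging the sharper bound $D_t \leq 2-\beta_{j,t-1}(h^*)$ valid when $\tau_t \neq h^*$ (which follows from $\beta_{j,t-1}(\tau_t) \leq 1-\beta_{j,t-1}(h^*)$).

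The \textbf{main obstacle} is that the crude factor-of-$2$ bound, iterated over long stretches of steps with $\tau_t \neq h^*$, only produces a decaying lower bound: since the gaps between occurrences of $\{\tau_t=h^*\}$ are geometric with mean $M$, arbitrarily long such stretches appear almost surely. Closing the joint induction therefore requires combining (i) the resets at times $\tau_t=h^*$, which restore $\hat{\beta}_{j,t}^{(i)}(h^*)$ to at least $\beta_{j,t-1}(h^*)/2$; (ii) the sharper bound $D_t \leq 2-\nu$ that follows once the induction supplies $\beta_{j,t-1}(h^*) \geq \nu$ for the neighbor $j$; and (iii) the martingale-like identity $\mathbb{E}[D_t\mid\mathcal{F}_{t-1}] = 1$, which holds because $\tau_t$ is uniform over $\mathcal{H}$ and both $\beta_{j,t-1}$ and $\hat{\beta}_{j,t-1}^{(i)}$ are probability vectors summing to $1$. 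A bootstrap using these ingredients produces a self-consistent pair $\nu(\omega) \leq \nu_1(\omega)$ and $t'(\omega) \geq t_1(\omega)$ for which both lower bounds in the lemma hold simultaneously, completing the proof.
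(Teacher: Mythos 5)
Your Part~1 is correct and is essentially the paper's argument: the paper also obtains the uniform lower bound on $\alpha_{i,t}(h^{*})$ from almost-sure convergence of the local beliefs (citing Lemma~2 of \cite{Rao24} rather than Lemma~\ref{lm:localbelief} directly), and finiteness of $V$ supplies a single $t'(\omega)$. Likewise, your reduction of Part~2 to the inequality $\beta_{i,t}(h^{*}) \geq \min\bigl(\{\hat{\beta}_{j,t}^{(i)}(h^{*})\}_{j\in N_i},\,\beta_{i,t-1}(h^{*}),\,\alpha_{i,t}(h^{*})\bigr)$, using that the denominator of \eqref{eq:public_previous} is at most $\sum_{h'}\alpha_{i,t}(h')=1$, is exactly the paper's step.

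The genuine gap is the one you yourself flag: everything hinges on a lower bound on $\hat{\beta}_{j,t}^{(i)}(h^{*})$ that is \emph{uniform in $t$}, and your writeup does not prove one. Under \eqref{eq:estimate_previous} this coordinate is divided by $D_t$ at every step with $\tau_t\neq h^{*}$, $D_t$ exceeds $1$ whenever $\beta_{j,t-1}(\tau_t)>\hat{\beta}_{j,t-1}^{(i)}(\tau_t)$, and between two resets at $\tau_t=h^{*}$ the coordinate is divided by $\prod_s D_s$ over a gap whose length is a.s.\ unbounded. None of your three ingredients controls that product: the refinement $D_t\leq 2-\nu$ is still larger than $1$ and so still compounds; and the identity $\mathbb{E}[D_t\mid\mathcal{F}_{t-1}]=1$ only yields, via Jensen, that $\sum_s\log D_s$ has nonpositive drift, which is an in-expectation statement compatible with the partial products over individual gaps being arbitrarily large, and in any case cannot produce a deterministic, time-uniform lower bound on a fixed sample path. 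The sentence ``a bootstrap using these ingredients produces a self-consistent pair'' is precisely the claim that needs proof; the circularity between the bound $\nu$ on $\beta_{j,t}(h^{*})$ and the bound on the estimates is the entire difficulty, and it is asserted rather than broken. For comparison, the paper's own proof takes the direct route: it establishes positivity of $\eta_2(t)$ and $\eta_3(t)$ at each finite time, fixes $\nu=\min\{\eta_1-\rho,\eta_2(t'-1),\eta_3(t')\}$ using quantities at the single time $t'$, verifies $\beta_{i,t'}(h^{*})\geq\nu$, and then declares the bound propagated ``by induction'' --- thereby passing over the same division-by-$D_t$ issue without comment. So your diagnosis of where the difficulty sits is sharper than the paper's treatment, but to have a proof you must actually supply the missing estimate, e.g.\ an explicit bound on $\prod_s D_s$ over a gap (say by comparing the mass $\leq 1$ that $\hat{\beta}_{j,\cdot}^{(i)}$ assigns to $\mathcal{H}\setminus\{h^{*}\}$ with the mass $\leq 1-\nu$ that $\beta_{j,\cdot}$ assigns to the same set); as written, Part~2 is incomplete.
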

\begin{proof}
	We prove the result for a sample path $\omega \in \Omega_0$ and omit mentioning $\omega$ where contextually clear. Let $\eta_1 := \min\limits_ {\forall i \in V}\alpha_{i,0}(h^{*})>0$ and there exists $\rho>0$ such that $\eta_1 - \rho >0.$ By employing arguments similar to those used in the proof of part (1) of Lemma 2 in \cite{Rao24}, one can establish that for each sample path $\omega,$ there exists a time-step $t'(\omega)$ such that for all $t \geq t'(\omega),$  $\alpha_{i,t}(h^{*}) \geq \eta_1 - \rho > 0,~\forall i \in V.$ To prove part $(ii)$ of the Lemma, Let $\eta_2(t) := \min\limits_ {\forall i \in V}\beta_{i,t}(h^{*})$ and $\eta_3(t) := \min\limits_ {\forall i \in V, ~j \in N_i} \hat{\beta}_{j,t}^{(i)}(h^{*}).$ We claim that $\eta_2(t)>0$ and $\eta_3(t)>0.$ We will prove this claim by induction. Using the update rules \eqref{eq:local_belief}, \eqref{eq:estimate_previous} and \eqref{eq:public_previous}, it is straightforward to verify that $\eta_2(1)>0$ and $\eta_3(1)>0$ due to non-zero prior beliefs. This proves the base case of the induction argument. To prove the induction step, let $\eta_2(t'-1)>0$ and $\eta_3(t'-1)>0.$ These terms can only be zero if any agent $i \in V$ sets its local belief on $h^{*}$ to zero which is not possible based on the bound on $\alpha_{i,t}.$ It is clear from \eqref{eq:estimate_previous} that $\eta_3(t')>0$ because $\eta_3(t'-1)>0$ and $\eta_2(t'-1)>0.$ Let $\nu=\min \{\eta_1-\rho,\eta_2(t'-1),\eta_3(t')\}>0.$ Then, for an agent $i \in V,$ the public belief on $h^{*}$ at time $t'$ is:
	\begin{align*}
	\beta_{i,t'}(h^{*}) &= \frac{\min\left( \beta_{i,t'-1}(h^{*}),\left(\hat{\beta}_{j,t'}^{(i)}(h^{*})\right)_{j \in N_i}, \alpha_{i,t'}(h^{*}) \right)}{\sum\limits_{k=1}^{M}\min\left( \beta_{i,t'-1}(h_k),\left(\hat{\beta}_{j,t'}^{(i)}(h_k)\right)_{j \in N_i}, \alpha_{i,t'}(h_k) \right)} \\
	&\stackrel {\text{(a)}} {\geq} \frac{\nu}{\sum\limits_{k=1}^{M}\min\left( \beta_{i,t'-1}(h_k),\left(\hat{\beta}_{j,t'}^{(i)}(h_k)\right)_{j \in N_i}, \alpha_{i,t'}(h_k) \right)} \\
	&\stackrel {\text{(b)}} {\geq} \frac{\nu}{\sum\limits_{k=1}^{M} \alpha_{i,t'}(\theta_k)} \stackrel {\text{(c)}}{=}\nu.
	\end{align*}
	Here inequality (a) is derived from the definition of $\nu$ and the condition $\alpha_{i,t}(h^{*}) \geq \eta_1-\rho,~\forall t \geq t'.$ Inequality (b) is obtained by applying upper bound to the denominator terms. Step (c) utilizes the fact that $\alpha_{i,t'}(.)$ is a probability vector. Consequently, by induction, $\beta_{i,t}(h^{*}) \geq \nu,~\forall t > t'.$
\end{proof}
The following lemma proves a bound on the rate of rejection of a false hypothesis by a discriminating agent.

\begin{lemma} \label{lm:rate_discriminating}
	Consider any false hypothesis $h \in \mathcal{H} \setminus \{h^{*}\},$ and let $i \in D(h^{*},h)$ be an agent. Then, the following holds true:
	\begin{equation*}
	\lim_{t \rightarrow \infty} \inf~ -\frac{\log{\beta_{i,t}(h)}}{t} \geq K_i(h^{*},h) ~~\text{a.s.}
	\end{equation*}
\end{lemma}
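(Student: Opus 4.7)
The plan is to exploit the min-rule to sandwich $\beta_{i,t}(h)$ between $\alpha_{i,t}(h)$ and a denominator that is bounded away from $0$, and then transfer the exponential decay rate of $\alpha_{i,t}(h)$ provided by Lemma~\ref{lm:localbelief} to $\beta_{i,t}(h)$.

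First I would note that since $\alpha_{i,t}(h)$ appears inside the $\min$ in the numerator of \eqref{eq:public_previous}, we immediately get the upper bound
\begin{equation*}
\beta_{i,t}(h) \leq \frac{\alpha_{i,t}(h)}{\sum\limits_{h' \in \mathcal{H}}\min\left( \left(\hat{\beta}_{j,t}^{(i)}(h')\right)_{ j \in N_i}, \beta_{i,t-1}(h'),\alpha_{i,t}(h') \right)}.
\end{equation*}
Next, restricting to a sample path $\omega \in \Omega_0$, I would lower bound the denominator by just the single summand corresponding to $h' = h^{*}$. By Lemma~\ref{lm:bound_true_hypo}, for all $t \geq t'(\omega)$, $\alpha_{i,t}(h^{*}) \geq \nu(\omega)$ and $\beta_{i,t-1}(h^{*}) \geq \nu(\omega)$, and the same lower bound applies to $\hat{\beta}_{j,t}^{(i)}(h^{*})$ for every $j \in N_i$ (this was established inside the induction for $\eta_3$ in the proof of Lemma~\ref{lm:bound_true_hypo}). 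Therefore the $h^{*}$-summand is at least $\nu(\omega)$, and for all $t \geq t'(\omega)$,
\begin{equation*}
\beta_{i,t}(h) \leq \frac{\alpha_{i,t}(h)}{\nu(\omega)}.
\end{equation*}

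Taking $-\tfrac{1}{t}\log(\cdot)$ of both sides gives
\begin{equation*}
-\frac{\log\beta_{i,t}(h)}{t} \geq -\frac{\log\alpha_{i,t}(h)}{t} + \frac{\log\nu(\omega)}{t}.
\end{equation*}
The second term vanishes as $t \to \infty$. For the first term, I would rewrite
\begin{equation*}
-\frac{\log\alpha_{i,t}(h)}{t} = -\frac{1}{t}\log\frac{\alpha_{i,t}(h)}{\alpha_{i,t}(h^{*})} - \frac{\log\alpha_{i,t}(h^{*})}{t},
\end{equation*}
and invoke Lemma~\ref{lm:localbelief}: part (3) forces the first summand to converge a.s.\ to $K_i(h^{*},h)$, while part (2) ensures $\alpha_{i,\infty}(h^{*})$ exists and is strictly positive, so the second summand tends to $0$. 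Combining these limits yields $\liminf_{t\to\infty} -\tfrac{\log\beta_{i,t}(h)}{t} \geq K_i(h^{*},h)$ on $\Omega_0$, which has full measure.

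The main obstacle is really just the denominator lower bound: one must be careful that the min over neighbors' \emph{estimates} $\hat{\beta}_{j,t}^{(i)}(h^{*})$ is bounded away from zero uniformly in $t$, not only the direct beliefs. This is exactly what Lemma~\ref{lm:bound_true_hypo} (via the auxiliary quantity $\eta_3$) gives, so once that lemma is in hand the rest of the argument is a clean transfer of Lemma~\ref{lm:localbelief}'s rate from local to public beliefs.
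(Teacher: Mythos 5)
Your proof is correct and follows essentially the same route as the paper: bound $\beta_{i,t}(h)$ above by $\alpha_{i,t}(h)/\nu$ using the min-rule and Lemma~\ref{lm:bound_true_hypo}, then transfer the rate from Lemma~\ref{lm:localbelief}. The only (cosmetic) difference is that you decompose $-\tfrac{1}{t}\log\alpha_{i,t}(h)$ via parts (2) and (3) of Lemma~\ref{lm:localbelief} directly, whereas the paper first converts part (3) into an $\epsilon$-slack exponential bound and lets $\epsilon \to 0$ at the end; you are also slightly more explicit than the paper in noting that the lower bound on the neighbors' estimates $\hat{\beta}_{j,t}^{(i)}(h^{*})$ comes from the $\eta_3$ argument inside Lemma~\ref{lm:bound_true_hypo}.
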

\begin{proof}
	We demonstrate the result for each sample path $\omega \in \Omega_0,$ omitting explicit mention of $\omega$ when the context is clear. Consider any false hypothesis $h \in \mathcal{H} \setminus \{h^{*}\},$ and an agent $i \in D(h^{*},h).$ For any $\epsilon > 0,$ note that since $i \in D(h^{*},h),$  part $(iii)$ in Lemma \ref{lm:localbelief} implies that there exists a time $t_i(\omega, h , \epsilon)$ such that
	\begin{align} \label{eq:local_upper}
	\alpha_{i,t}(h) < e^{-(K_i(h^{*},h)-\epsilon)t}, ~~\forall~ t \geq t_i(\omega, h , \epsilon). 
	\end{align}
	By Lemma~\ref{lm:bound_true_hypo}, there exists $t'(\omega) \in (0,\infty)$ and $\nu(\omega) >0$ such that $\alpha_{i,t}(h^{*})\geq \nu(\omega), ~\beta_{i,t}(h^{*})\geq \nu(\omega), ~ \forall ~ t \geq t'(\omega), ~\forall ~ i \in V.$ Let $\bar{t}=\max \{t',t_i\}.$ The public belief on $h$ for an agent $i \in D(h^{*},h)$ is:
	\begin{align*}
	\beta_{i,\bar{t}+1}(h) &= \frac{\min\left( \beta_{i,\bar{t}}(h),\left(\hat{\beta}_{j,\bar{t}+1}^{(i)}(h)\right)_{j \in N_i}, \alpha_{i,\bar{t}+1}(h) \right)}{\sum\limits_{k=1}^{M}\min\left( \beta_{i,\bar{t}}(h_k),\left(\hat{\beta}_{j,\bar{t}+1}^{(i)}(h_k)\right)_{j \in N_i}, \alpha_{i,\bar{t}+1}(h_k) \right)} \\
	&\stackrel {\text{(a)}} {\leq} \frac{\alpha_{i,\bar{t}+1}(h)}{\sum\limits_{k=1}^{M}\min\left( \beta_{i,\bar{t}}(h_k),\left(\hat{\beta}_{j,\bar{t}+1}^{(i)}(h_k)\right)_{j \in N_i}, \alpha_{i,\bar{t}+1}(h_k) \right)} \\
	&\stackrel {\text{(b)}} {\leq} \frac{e^{-(K_i(h^{*},h)-\epsilon)(\bar{t}+1)}}{\min\left( \beta_{i,\bar{t}}(h^{*}),\left(\hat{\beta}_{j,\bar{t}+1}^{(i)}(h^{*})\right)_{j \in N_i}, \alpha_{i,\bar{t}+1}(h^{*}) \right)}\\
	&\stackrel {\text{(c)}} {\leq} \frac{e^{-(K_i(h^{*},h)-\epsilon)(\bar{t}+1)}}{\nu}
	\end{align*}
	Here, (a) follows from the definition of minimum, and (b) follows from \eqref{eq:local_upper} and lower bounding the denominator. Step (c) follows from the fact that both local and public beliefs on true hypothesis are bounded away from zero. The same reasoning applies to upper bound $\beta_{i,t}(h)$  for all $t \geq \bar{t}+1.$ Taking the logarithm of this bound and applying the limit, we obtain $-\frac{\log\beta_{i,t}(h)}{t} > (K_i(h^{*},h)-\epsilon)+\frac{\log\nu}{t}.$ By letting $\epsilon$ approach zero, the result follows.
\end{proof}

Now we are ready to prove the Theorem~\ref{th:partial_previous}.

\begin{proof}[Proof of Theorem~\ref{th:partial_previous}]
	For any agent $i \in D(h^{*},h),$ the result directly follows from Lemma \ref{lm:rate_discriminating}. Recall that agents choose a hypothesis $h \in \mathcal{H}$ to communicate at time $t$ with positive probability i.e., $\mathsf{P}(h)>0.$ Thus, by Borel-Cantelli Lemma, each hypothesis is chosen infinitely often for communication. Therefore, there exist infinitely many time steps $t>\hat{t}=\max(t',\bar{t}),$ such that hypothesis $h$ is chosen for communication at time $t.$ Let $j$ be a neighbor of an agent $i \in D(h^{*},h).$ We will now prove a bound on its public belief. Assume that there exists a time $t_1>\hat{t}+1$ such that $h$ is chosen for transmission. i.e., $\tau_{t_1}=h.$ The public belief of an agent $j$ at time $t_1$ is:
	\begin{align*}
	\beta_{j,t_1}(h) &= \frac{\min\left( \beta_{j,t_1-1}(h),\left(\hat{\beta}_{i,t_1}^{(j)}(h)\right)_{i \in N_j}, \alpha_{j,t_1}(h) \right)}{\sum\limits_{k=1}^{M}\min\left( \beta_{j,t_1-1}(h_k),\left(\hat{\beta}_{i,t_1}^{(j)}(h_k)\right)_{i \in N_j}, \alpha_{j,t_1}(h_k) \right)} \\
	&\stackrel {\text{(a)}} {\leq} \frac{\left(\hat{\beta}_{i,t_1}^{(j)}(h)\right)_{i \in N_j}}{\sum\limits_{k=1}^{M}\min\left( \beta_{j,t_1-1}(h_k),\left(\hat{\beta}_{i,t_1}^{(j)}(h_k)\right)_{i \in N_j}, \alpha_{j,t_1}(h_k) \right)} \\
	&\stackrel {\text{(b)}} {\leq} \frac{\beta_{i,t_1-1}(h)}{\left(1-\hat{\beta}_{i,t_1-1}^{(j)}(\tau_{t_1})+\beta_{i,t_1-1}(\tau_{t_1})\right) \left( \min\left( \beta_{j,t_1-1}(h^{*}),\left(\hat{\beta}_{i,t_1}^{(j)}(h^{*})\right)_{i \in N_j}, \alpha_{j,t_1}(h^{*}) \right) \right)}\\
	& \stackrel {\text{(c)}}{\leq}\frac{\beta_{i,t_1-1}(h)}{\nu \left(1-\hat{\beta}_{i,t_1-1}^{(j)}(\tau_{t_1})+\beta_{i,t_1-1}(\tau_{t_1})\right)} \stackrel {\text{(d)}}{=} \frac{\beta_{i,t_1-1}(h)}{\nu \delta}.
	\end{align*}
	Step (a) follows from the definition of minimum. Inequality (b) follows from \eqref{eq:estimate_previous} and the fact that the hypothesis $h$ is chosen for transmission at $t_1.$ Step (c) utilizes the fact that the belief on the true hypothesis is bounded away from zero. Finally (d) follows from the fact that $1-\hat{\beta}_{i,t_1-1}^{(j)}(\tau_{t_1})+\beta_{i,t_1-1}(\tau_{t_1})=\delta > 0.$ We will prove this by contradiction. Let $1-\hat{\beta}_{i,t_1-1}^{(j)}(\tau_{t_1})+\beta_{i,t_1-1}(\tau_{t_1})=0,$ this is true if and only if $\hat{\beta}_{i,t_1-1}^{(j)}(\tau_{t_1})=1$ and $\beta_{i,t_1-1}(\tau_{t_1})=0.$ If $\beta_{i,t_1-1}(\tau_{t_1})=0,$ it implies that $\beta_{j,t_1}(h) \leq 0,$ which is the desired result. If $\beta_{i,t_1-1}(\tau_{t_1})=\epsilon,$ it implies that $1-\hat{\beta}_{i,t_1-1}^{(j)}(\tau_{t_1})+\beta_{i,t_1-1}(\tau_{t_1})=\delta > 0$ because $\hat{\beta}_{i,t_1-1}^{(j)}$ is a probability vector.
	
	Now consider a time $t_2 \geq \hat{t}+1$ such that $\hat{h} \neq h$ is chosen for communication. WLOG, let $t_2=t_1+1,$ then using \eqref{eq:public_previous},
	\begin{equation*}
	\beta_{j,t_2}(h) \leq \frac{\beta_{j,t_2-1}(h)}{\nu}=\frac{\beta_{j,t_1}(h)}{\nu}.
	\end{equation*}
	This proves that $\beta_{j,t}(h)$ goes to zero exponentially fast after $\hat{t}.$
	This completes the argument for $j$ which is a neighbor of $i \in D(h^{*},h).$ Now, consider a neighbor $p$ of $j.$ $\beta_{p,t}(h)$ can be bounded in terms of $\beta_{j,t}(h)$ using similar arguments. As $G$ is strongly connected this argument can be applied to all the agents iteratively thus bounding $\beta_{k,t}(h)~\forall k \in V$ by $\beta_{i,t}.$ The same arguments hold for any $h \in \mathcal{H} \setminus \{h^{*}\}.$ Hence $\beta_{i,t}(h) \xrightarrow{\text{a.s.}}{} 0, \forall h \in \mathcal{H} \setminus \{h^{*}\}$ and $\forall i \in V.$ This implies that $\beta_{i,t}(h^{*}) \xrightarrow{\text{a.s.}}{} 1,~\forall i \in V.$
\end{proof}

Note that the proof of Theorem~\ref{th:partial_previous} works because of the fact that every false hypothesis is shared infinitely often and thus the public beliefs are propagated from discriminating agent to every other agent in $G.$ Thus, if the belief on one fixed hypothesis is shared at all times, the update rules \eqref{eq:estimate_previous}, \eqref{eq:public_previous} will not guarantee true learning in the network.

\begin{corollary} \label{cor:fixed_partial}
	Let the communication graph $G$ be strongly connected, and observation model satisfies Assumption~\ref{as:global}. If every agent shares belief on only one fixed hypothesis $h \in \mathcal{H}$ at all times and uses estimates from \eqref{eq:estimate_previous} to update beliefs in \eqref{eq:local_belief}, \eqref{eq:public_previous}, then true learning does not happen in the network.
\end{corollary}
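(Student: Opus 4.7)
My plan is to exhibit a counterexample: a valid instance, consistent with a strongly connected $G$ and Assumption~\ref{as:global}, in which at least one agent's public belief on $h^{*}$ fails to converge to $1$. The strategy follows the remark immediately preceding the corollary: information about hypotheses outside $\{h\}$ is never transmitted through \eqref{eq:estimate_previous}.

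First, I will isolate the key structural property of the estimate recursion \eqref{eq:estimate_previous} under $\tau_t \equiv h$: every coordinate $h' \neq h$ is updated by the same scalar factor $1/D$, where $D = 1 - \hat{\beta}_{j,t-1}^{(i)}(h) + \beta_{j,t-1}(h)$ depends only on $h$. Since $\hat{\beta}_{j,0}^{(i)}(h') = 1/M$ uniformly in $h'$, an immediate induction on $t$ yields $\hat{\beta}_{j,t}^{(i)}(h_1) = \hat{\beta}_{j,t}^{(i)}(h_2)$ for every pair $h_1, h_2 \in \mathcal{H} \setminus \{h\}$, every $i \in V$, every $j \in N_i$, and every $t \geq 0$. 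In particular, the estimates of the non-shared hypotheses never absorb any information about the neighbours' private signals.

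Second, I will construct the concrete instance. Take $M \geq 3$ and let the shared hypothesis satisfy $h \neq h^{*}$. Pick any $h_1 \in \mathcal{H} \setminus \{h, h^{*}\}$, and design the observation model so that some agent $i \in V$ satisfies $f_i(o \,|\, h^{*}) = f_i(o \,|\, h_1)$ for every $o \in \mathcal{O}_i$, while at least one other agent in $V$ lies in $D(h^{*}, h_1)$; this keeps Assumption~\ref{as:global} satisfied. The Bayesian update \eqref{eq:local_belief} then forces $\alpha_{i,t}(h^{*}) = \alpha_{i,t}(h_1)$ at every $t$, since the relevant likelihood ratios are identically $1$.

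Finally, I will run a straightforward induction on $t$ to prove $\beta_{i,t}(h^{*}) = \beta_{i,t}(h_1)$: the base case is $\beta_{i,0}(h^{*}) = \beta_{i,0}(h_1) = 1/M$, and in the inductive step the three families of arguments inside the minimum in \eqref{eq:public_previous} coincide between $h^{*}$ and $h_1$, namely $\beta_{i,t-1}(\cdot)$ by the induction hypothesis, $\alpha_{i,t}(\cdot)$ by step two, and $\hat{\beta}_{j,t}^{(i)}(\cdot)$ for every neighbour $j$ by step one; the normalising denominator is common to both, so the equality propagates. Hence $\beta_{i,t}(h^{*}) \leq 1/2$ for every $t$, and agent $i$'s belief on the true hypothesis cannot converge to $1$. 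The main obstacle is conceptual rather than computational: one must recognise that \eqref{eq:estimate_previous} applies a \emph{common} scalar to every non-shared coordinate, so the estimates are blind to the neighbours' relative confidence among the hypotheses in $\mathcal{H} \setminus \{h\}$; once this symmetry is spotted, both inductions are routine, and the only modelling care required is to keep Assumption~\ref{as:global} intact while ensuring that a specific agent still fails to distinguish $h^{*}$ from a third hypothesis.
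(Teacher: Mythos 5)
Your proposal is correct, and it is in fact more than the paper itself provides: the paper offers no proof of Corollary~\ref{cor:fixed_partial}, justifying it only by the remark that the argument for Theorem~\ref{th:partial_previous} breaks down when a hypothesis is never shared (so that beliefs cannot propagate from discriminating agents). That remark shows the \emph{proof} fails, not that \emph{learning} fails; your construction closes that gap with an explicit counterexample. The key observation --- that under $\tau_t \equiv h$ the recursion \eqref{eq:estimate_previous} multiplies every coordinate $h' \neq h$ by the same scalar $1/D$, so the estimates $\hat{\beta}_{j,t}^{(i)}$ stay permanently uniform over $\mathcal{H}\setminus\{h\}$ --- is exactly the right structural fact, and your two inductions (equality of the estimate coordinates, then equality of $\beta_{i,t}(h^{*})$ and $\beta_{i,t}(h_1)$ at an agent with identical likelihoods for $h^{*}$ and $h_1$) are both sound, yielding $\beta_{i,t}(h^{*}) \leq 1/2$ for all $t$. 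Two small points of care: first, your counterexample requires $M \geq 3$ and places the true hypothesis outside the shared one ($h \neq h^{*}$); this is the right reading of the corollary (for any fixed shared $h$ there is an admissible instance where learning fails, since $h^{*}$ is part of the instance, not the protocol), but you should state explicitly that you are refuting the guarantee rather than proving failure on every instance --- indeed for $M=2$ sharing one fixed coordinate is equivalent to full sharing and learning still occurs. Second, rename your auxiliary hypothesis, since $h_1$ already denotes the first element of $\mathcal{H}$ in the paper's notation.
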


\subsection{Memory-efficient learning}
\label{sec:memory}

In this section, we present a memory-efficient update rule with partial information sharing. Observe that to compute the estimates $\hat{\beta}_{j,t}^{(i)}$ using \eqref{eq:estimate_previous} agent $i$ needs to maintain $M$ dimensional vector for each of its neighbors\footnote{The memory required at every agent for \eqref{eq:estimate_previous} is proportional to the size of its neighbor set (which could be $n-1$ in the worst case) and the size of hypothesis set.}. In the memory-efficient update \cite{Kayaalp24}, an agent $i$ uses its own belief  to estimate its neighbors' beliefs on missing hypotheses at time $t$ as follows:
\begin{align}
\hat{\beta}_{j,t}^{(i)}(h) = \begin{cases}
\beta_{j,t-1}(h)/C, ~~ h = \tau_t \\
\beta_{i,t-1}(h)/C, ~~ h \neq \tau_t.
\end{cases} \label{eq:estimate_own}
\end{align}
Here $C = 1- \beta_{i,t-1}(\tau_t) + \beta_{j,t-1}(\tau_t)$ to ensure that the estimate $\hat{\beta}_{j,t}^{(i)}$ is a probability vector. Agent $i$ then updates its public belief using \eqref{eq:public_previous}. The following result shows that the true leanring is achieved by all the agents using estimates of \eqref{eq:estimate_own}. 

\begin{theorem}\label{th:partial_own}
	Let the communication graph $G$ be strongly connected, and observation model satisfies Assumption~\ref{as:global}. If every agent shares belief on only one randomly chosen hypothesis at a time and uses estimates from \eqref{eq:estimate_own} to update beliefs in \eqref{eq:local_belief}, \eqref{eq:public_previous}, then true learning happens, i.e., $\beta_{i,t}(h^*) \xrightarrow{\text{a.s.}} 1, ~\forall i \in V.$
\end{theorem}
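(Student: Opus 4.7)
The plan is to follow the same three-tier structure as in the proof of Theorem~\ref{th:partial_previous}: (i) establish that local and public beliefs on $h^*$ stay uniformly bounded away from zero after a finite (sample-path dependent) time; (ii) show that every discriminating agent rejects each false hypothesis at rate at least the corresponding KL-divergence; and (iii) propagate this rejection across the strongly connected graph to all other agents using the fact that each hypothesis is transmitted infinitely often. Since Lemma~\ref{lm:localbelief} concerns only $\alpha_{i,t}(\cdot)$ and is untouched by the change of estimate rule, the whole argument needs to be re-checked only at the two points where \eqref{eq:estimate_own} differs from \eqref{eq:estimate_previous}: the lower bound on $\hat{\beta}_{j,t}^{(i)}(h^*)$ used in Lemma~\ref{lm:bound_true_hypo} and the propagation step.

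First I would re-derive the analog of Lemma~\ref{lm:bound_true_hypo}. The new normalizer $C = 1-\beta_{i,t-1}(\tau_t)+\beta_{j,t-1}(\tau_t)$ can be bounded as follows once $\beta_{k,t-1}(h^*)\ge \nu$ holds for all $k\in V$: if $\tau_t=h^*$ then $C\ge \beta_{j,t-1}(h^*)\ge \nu$, and if $\tau_t\ne h^*$ then $1-\beta_{i,t-1}(\tau_t)\ge \beta_{i,t-1}(h^*)\ge \nu$, so in both cases $\nu\le C\le 2$. Consequently \eqref{eq:estimate_own} gives $\hat{\beta}_{j,t}^{(i)}(h^*)\ge \min(\beta_{j,t-1}(h^*),\beta_{i,t-1}(h^*))/2$, and the induction in the proof of Lemma~\ref{lm:bound_true_hypo} goes through to deliver $\nu(\omega)>0$ and $t'(\omega)$ with $\alpha_{i,t}(h^*),\beta_{i,t}(h^*),\hat{\beta}_{j,t}^{(i)}(h^*)\ge \nu$ for all $t\ge t'$, $i\in V$, $j\in N_i$. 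Lemma~\ref{lm:rate_discriminating} then transfers verbatim: for $i\in D(h^*,h)$ the exponential upper bound on $\alpha_{i,t}(h)$ enters the numerator of \eqref{eq:public_previous} and the $h^*$-term in the denominator is at least $\nu$, yielding $\liminf_t -\log\beta_{i,t}(h)/t\ge K_i(h^*,h)$ a.s.

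The genuinely new content is the propagation step. Fix a discriminating agent $i$ and a neighbor $j\in N_i$. At times $t_1>\hat t$ with $\tau_{t_1}=h$ (infinitely many, by Borel--Cantelli applied to the i.i.d. uniform choice of $\tau_t$), we have $\hat{\beta}_{i,t_1}^{(j)}(h)=\beta_{i,t_1-1}(h)/C'$ with $C'\ge \nu$, so bounding the denominator of \eqref{eq:public_previous} below by $\nu$ yields $\beta_{j,t_1}(h)\le \beta_{i,t_1-1}(h)/\nu^{2}$. At intermediate steps with $\tau_t\ne h$, the min in the numerator still contains $\beta_{j,t-1}(h)$, giving $\beta_{j,t}(h)\le \beta_{j,t-1}(h)/\nu$. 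Combining the exponential decay $\beta_{i,t_1-1}(h)\le e^{-(K_i(h^*,h)-\epsilon)(t_1-1)}$ at transmission times with the almost-sure $O(\log t)$ bound on inter-transmission gaps (geometric waiting times), the intermediate growth by $(1/\nu)^{O(\log t)}$ is polynomial in $t$ and is dominated by the exponential decay, so $\beta_{j,t}(h)\xrightarrow{\mathrm{a.s.}}0$. Iterating along a BFS from $i$ in the strongly connected graph $G$ extends the same argument to every agent, and since this works for any $h\ne h^*$, we conclude $\beta_{i,t}(h^*)\xrightarrow{\mathrm{a.s.}}1$ for all $i\in V$.

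The step I expect to be the main obstacle is the propagation analysis, because under \eqref{eq:estimate_own} a neighbor $j$'s estimate of agent $i$'s belief on a non-transmitted hypothesis is $j$'s own belief (rescaled), not a decayed relay of $i$'s past belief as in \eqref{eq:estimate_previous}. Information about $h$ therefore flows from $i$ to $j$ only at the transmission instants of $h$, and the careful accounting of inter-transmission growth versus transmission-time decay (and the uniform lower bound on the normalizer $C$) is what makes the argument delicate; the remaining pieces are close to routine adaptations of the arguments already used for Theorem~\ref{th:partial_previous}.
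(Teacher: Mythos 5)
Your plan follows the paper's proof of Theorem~\ref{th:partial_own} essentially step for step: reuse Lemma~\ref{lm:localbelief}, the analogs of Lemmas~\ref{lm:bound_true_hypo} and~\ref{lm:rate_discriminating}, handle the new normalizer $C$, and propagate the exponential decay from a discriminating agent through the strongly connected graph at the infinitely many transmission times of $h$. If anything you are more careful than the paper at the two delicate points --- you lower-bound $C\ge\nu$ directly from $1-\beta_{i,t-1}(\tau_t)\ge\beta_{i,t-1}(h^{*})$ rather than via the paper's contradiction argument, and you explicitly control the $(1/\nu)^{t-t_1}$ growth between successive transmissions of $h$ using the almost-sure logarithmic bound on inter-transmission gaps, a point the paper dispatches with ``WLOG, let $t_2=t_1+1$.''
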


The proof of Theorem~\ref{th:partial_own} follows on the lines of that of Theorem~\ref{th:partial_previous} presented in Section~\ref{sec:results}. 

\begin{proof}
	For any agent $i \in D(h^{*},h),$ the result directly follows from Lemma \ref{lm:rate_discriminating}. Recall that agents choose a hypothesis $h \in \mathcal{H}$ to communicate at time $t$ with positive probability i.e., $\mathsf{P}(h)>0.$ Thus, by Borel-Cantelli Lemma, each hypothesis is chosen infinitely often for communication. Therefore, there exist infinitely many time steps $t>\hat{t}=\max(t',\bar{t}),$ such that hypothesis $h$ is chosen for communication at time $t.$ Let $j$ be a neighbor of an agent $i \in D(h^{*},h).$ We will now prove a bound on its public belief. Assume that there exists a time $t_1>\hat{t}+1$ such that $h$ is chosen for transmission. i.e., $\tau_{t_1}=h.$ The public belief of an agent $j$ at time $t_1$ is:
	\begin{align*}
	\beta_{j,t_1}(h) &= \frac{\min\left( \beta_{j,t_1-1}(h),\left(\hat{\beta}_{i,t_1}^{(j)}(h)\right)_{i \in N_j}, \alpha_{j,t_1}(h) \right)}{\sum\limits_{k=1}^{M}\min\left( \beta_{j,t_1-1}(h_k),\left(\hat{\beta}_{i,t_1}^{(j)}(h_k)\right)_{i \in N_j}, \alpha_{j,t_1}(h_k) \right)} \\
	&\stackrel {\text{(a)}} {\leq} \frac{\left(\hat{\beta}_{i,t_1}^{(j)}(h)\right)_{i \in N_j}}{\sum\limits_{k=1}^{M}\min\left( \beta_{j,t_1-1}(h_k),\left(\hat{\beta}_{i,t_1}^{(j)}(h_k)\right)_{i \in N_j}, \alpha_{j,t_1}(h_k) \right)} \\
	&\stackrel {\text{(b)}} {\leq} \frac{\beta_{i,t_1-1}(h)}{\left(1-\beta_{j,t-1}(\tau_t)+\beta_{i,t-1}(\tau_t)\right) \left( \min\left( \beta_{j,t_1-1}(h^{*}),\left(\hat{\beta}_{i,t_1}^{(j)}(h^{*})\right)_{i \in N_j}, \alpha_{j,t_1}(h^{*}) \right) \right)}\\
	& \stackrel {\text{(c)}}{\leq}\frac{\beta_{i,t_1-1}(h)}{\nu \left(1-\beta_{j,t-1}(\tau_t)+\beta_{i,t-1}(\tau_t)\right)} \stackrel {\text{(d)}}{=} \frac{\beta_{i,t_1-1}(h)}{\nu \delta}.
	\end{align*}
	Step (a) follows from the definition of minimum. Inequality (b) follows from \eqref{eq:estimate_own} and the fact that the hypothesis $h$ is chosen for transmission at $t_1.$ Step (c) utilizes the fact that the belief on the true hypothesis is bounded away from zero. Finally (d) follows from the fact that $1-\beta_{j,t-1}(\tau_t)+\beta_{i,t-1}(\tau_t)=\delta > 0.$ We will prove this by contradiction. Let $1-\beta_{j,t-1}(\tau_t)+\beta_{i,t-1}(\tau_t)=0,$ this is true if and only if $\beta_{j,t-1}(\tau_t)=1$ and $\beta_{i,t_1-1}(\tau_{t_1})=0.$ If $\beta_{i,t_1-1}(\tau_{t_1})=0,$ it implies that $\beta_{j,t_1}(h) \leq 0,$ which is the desired result. If $\beta_{i,t_1-1}(\tau_{t_1})=\epsilon,$ it implies that $1-\beta_{j,t-1}(\tau_t)+\beta_{i,t-1}(\tau_t)=\delta > 0$ because $\beta_{j,t-1}$ is a probability vector.
	
	Now consider a time $t_2 \geq \hat{t}+1$ such that $\hat{h} \neq h$ is chosen for communication. WLOG, let $t_2=t_1+1,$ then using \eqref{eq:public_previous},
	\begin{equation*}
	\beta_{j,t_2}(h) \leq \frac{\beta_{j,t_2-1}(h)}{\nu}=\frac{\beta_{j,t_1}(h)}{\nu}.
	\end{equation*}
	This proves that $\beta_{j,t}(h)$ goes to zero exponentially fast after $\hat{t}.$
	This completes the argument for $j$ which is a neighbor of $i \in D(h^{*},h).$ Now, consider a neighbor $p$ of $j.$ $\beta_{p,t}(h)$ can be bounded in terms of $\beta_{j,t}(h)$ using similar arguments. The same arguments hold for any $h \in \mathcal{H} \setminus \{h^{*}\}.$ Hence $\beta_{i,t}(h) \xrightarrow{\text{a.s.}}{} 0, \forall h \in \mathcal{H} \setminus \{h^{*}\}$ and $\forall i \in V.$ This implies that $\beta_{i,t}(h^{*}) \xrightarrow{\text{a.s.}}{} 1,~\forall i \in V.$
\end{proof}

\section{Numerical simulations}
\label{sec:simulation}

We consider a $100$ agents $4$-regular undirected strongly connected network. Each agent has $4$ neighbors. Observation set $\mathcal{O}_i$ of every agent $i$ has $500$ distinct signals and number of possible hypotheses are taken as $|\mathcal{H}| =20.$ For each agent the likelihood functions $f_i(o_i|h)~\forall h \in \mathcal{H}, \forall o_i \in \mathcal{O}_i$ are generated randomly and we assume $h_1$ to be the true hypothesis. We generate likelihood functions for agent $1$ such that it can distinguish all the hypotheses. Figure~\ref{fig:compare}a shows the evolution of a typical agent's belief on true hypothesis under all the update rules. Observe that agents learn the true hypothesis under all the three update models presented in Section~\ref{sec:model} and Section~\ref{sec:partial}. Figure~\ref{fig:compare}b shows the rate of convergence of the belief on a false hypothesis for an agent in the network. Observe that agents converge fastest when full information is shared (see \eqref{eq:public_full}), followed by convergence with partial information with estimates using previous beliefs (see \eqref{eq:estimate_previous}). As expected the convergence is the slowest for memory-efficient update (see \eqref{eq:estimate_own}) because of the use of agent's own beliefs to estimate the missing beliefs at any time.

\begin{figure}
	\subfigure[]{%
		\includegraphics[width=0.45\linewidth]{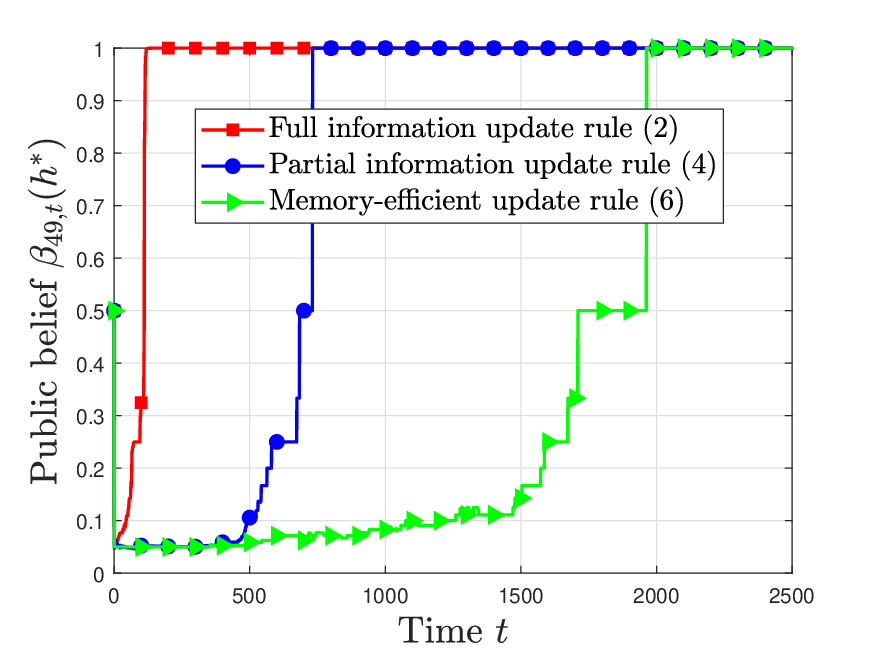}
	}
	\quad
	\subfigure[]{%
		\includegraphics[width=0.45\linewidth]{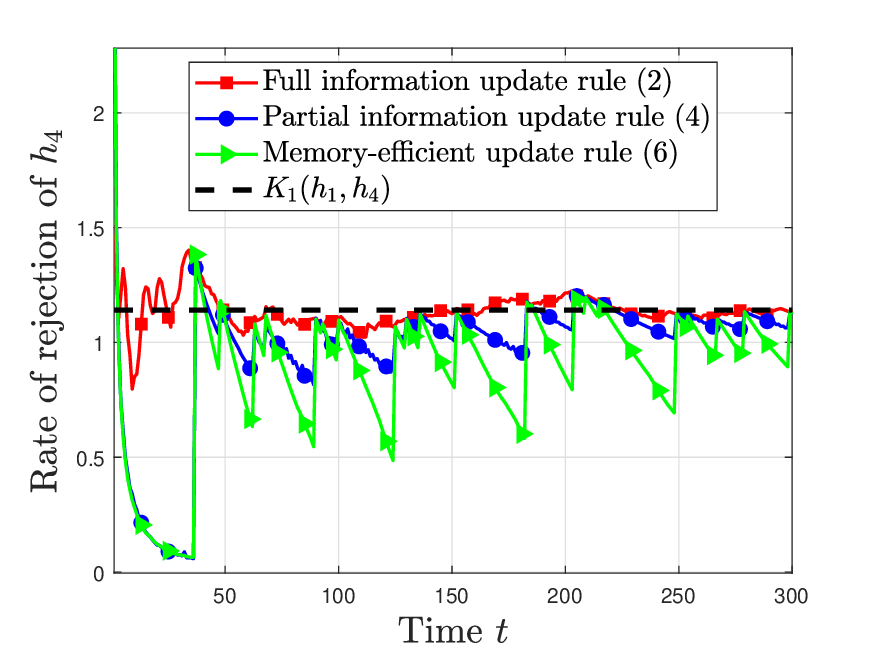}
	}
	
	\caption{Simulation results for a $100$ agents $4$-regular network using various update rules. (a) Evolution of $\beta_{i,t}(h^*)$ for a typical non-discriminating agent. (b) Rate of rejection of a false hypothesis, $h_4,$ namely $r_{i,t}(h_4) = -\frac{\log \beta_{i,t}(h_4)}{t}$ for a non-discriminating agent.}
	\label{fig:compare}
\end{figure}

\section{Conclusion}
\label{sec:conclusion}

In this work, we presented two update rules which enable agents to share belief on only one hypothesis at a time using min-rule for distributed learning. In the first rule, agents estimate missing beliefs of neighbors by storing previously shared beliefs and in the second rule (memory-efficient) they estimate using their own beliefs. We show that true learning happens almost surely in both cases and present simulation results to compare the rate of convergence of the proposed update rules. Future direction of work is to analyze the effect of quantization in distributed learning with partial information sharing.

\bibliographystyle{IEEEtran}
\bibliography{DL_QL_bib}

\end{document}